\pdfoutput=1

\newif\ifdraft \draftfalse
\newif\iffull \fulltrue

\makeatletter \@input{tex.flags} \makeatother

\iffull
\documentclass[11pt]{article}
\usepackage{fullpage}
\else
\documentclass{llncs}
\fi

\newcommand{\THESYSTEM}{\textsf{HOARe}$^2$\xspace}
\usepackage{xspace}
\usepackage{stmaryrd}
\usepackage{color}
\usepackage{float}
\usepackage{listings}
\usepackage{xfrac}
\usepackage[usenames,dvipsnames]{xcolor}
\usepackage{bbm}
\usepackage{url}
\usepackage{paralist}
\usepackage{bussproofs}
\usepackage{xifthen}
\usepackage{proof}
\usepackage{caption}

\DeclareCaptionFormat{myformat}{#1#2#3\hrulefill}
\captionsetup[figure]{format=myformat}


\usepackage{amsmath,amsfonts,amssymb}
\usepackage{mathabx}

\iffull
\usepackage{amsthm}

\theoremstyle{plain}
 \newtheorem{lemma}{Lemma}[section]
 \newtheorem{theorem}{Theorem}[section]
 \newtheorem{definition}{Definition}[section]
\fi

\usepackage{listings}

\usepackage[T1]{fontenc}
\usepackage{microtype}


\usepackage[scaled]{beramono}
\newcommand\Small{\fontsize{6.6pt}{6.8pt}\selectfont}
\newcommand*\LSTfont{\Small\ttfamily\SetTracking{encoding=*}{-60}\lsstyle}


\lstdefinestyle{numbers}{
         language=ML,
         basicstyle=\LSTfont,
         numbers=left,
         numbersep=5pt,
         extendedchars=true,
         breaklines=true,
         keywordstyle=\bfseries,
         morekeywords={mod,lap,expmech,expect,match,mlet,munit,cunit,clet,
         sample,random,def,function,for,expect_num,return,where},
         mathescape=true,
         literate={->}{{$\to$}}1%
                  {=>}{{$\Rightarrow$}}3%
                  {-->}{{$\qquad \leadsto$}}2,
         stringstyle=\ttfamily,
         showspaces=false,
         showtabs=false,
         xleftmargin=8pt,
         showstringspaces=false
 }

\lstdefinestyle{nonumbers}{
         language=ML,
         basicstyle=\LSTfont,
         numbers=none,
         extendedchars=true,
         breaklines=true,
         keywordstyle=\bfseries,
         morekeywords={mod,lap,expmech,expect,match,mlet,munit,cunit,clet,
         sample,random,def,function,for,expect_num,return},
         mathescape=true,
         literate={->}{{$\to$}}1%
                  {=>}{{$\Rightarrow$}}3%
                  {-->}{{$\qquad \leadsto$}}2,
         stringstyle=\ttfamily,
         showspaces=false,
         showtabs=false,
         showstringspaces=false
 }

\newcommand{\lstt}[1]{\mbox{\LSTfont #1}}


\definecolor{DarkGreen}{rgb}{0.1,0.5,0.1}
\definecolor{DarkRed}{rgb}{0.5,0.1,0.1}
\definecolor{DarkBlue}{rgb}{0.1,0.1,0.5}
\usepackage[]{hyperref}
\hypersetup{
    unicode=false,          
    pdftoolbar=true,        
    pdfmenubar=true,        
    pdffitwindow=false,      
    pdftitle={},    
    pdfauthor={}
    pdfsubject={},   
    pdfnewwindow=true,      
    pdfkeywords={keywords}, 
    colorlinks=true,       
    linkcolor=DarkRed,          
    citecolor=DarkGreen,        
    filecolor=DarkRed,      
    urlcolor=DarkBlue,          
}
\usepackage{cleveref}


\crefname{section}{\S}{\S}
\Crefname{section}{\S}{\S}




\def\lvmark{1}
\def\rvmark{2}

\renewcommand{\l}[1]{#1_\lvmark}
\renewcommand{\r}[1]{#1_\rvmark}




\usepackage{mathrsfs}












\newcommand{\List}[1]{\mathsf{list}\,{#1}}
\newcommand{\Distr}[1]{\mathsf{distr}\,{#1}}


\newcommand{\M}[2][]{%
            \ifthenelse{\isempty{#1}}%
              {\mathsf{M}\, {#2}}
              {\mathsf{M}_{#1}\, {#2}}}

\usepackage{tikz}
\usetikzlibrary{arrows}
\usepackage[numbers,longnamesfirst]{natbib}
\usepackage[subsectionbib]{bibunits}
\defaultbibliographystyle{abbrvnat}
\usepackage[ruled]{algorithm2e}

\SetAlFnt{\small}
\SetAlCapFnt{\small}
\SetAlCapNameFnt{\small}
\SetAlCapHSkip{0pt}
\IncMargin{-\parindent}

\DeclareMathOperator*{\argmax}{arg\,max}

\begin{document}

\title{Computer-aided Verification \iffull\else \\ \fi in Mechanism
  Design\iffull\else\thanks{The full version of this paper is available at
    \url{https://arxiv.org/abs/1502.04052}.}\fi}
\author{
Gilles Barthe\iffull\else$^1$\fi
\and
Marco Gaboardi\iffull\else$^2$\fi
\and
Emilio Jes\'us Gallego Arias\iffull\else$^3$\fi
\and
\iffull\else \\ \fi
Justin Hsu\iffull\else$^4$\fi
\and
Aaron Roth\iffull\else$^4$\fi
\and
Pierre-Yves Strub\iffull\else$^1$\fi
}

\iffull\else
\institute{
$\mbox{}^1$ IMDEA Software Institute \quad
$\mbox{}^2$ University at Buffalo, SUNY \\
\mbox{$\mbox{}^3$ MINES ParisTech}\quad
$\mbox{}^4$ University of Pennsylvania 
}
\fi

\maketitle

\begin{abstract}
  \iffull
In mechanism design, the gold standard solution concepts are \emph{dominant
  strategy incentive compatibility} and \emph{Bayesian incentive compatibility}.
These solution concepts relieve the (possibly unsophisticated) bidders from the
need to engage in complicated strategizing.  While incentive properties are
simple to state, their proofs are specific to the mechanism and can be quite
complex. This raises two concerns.  From a practical perspective, checking a
complex proof can be a tedious process, often requiring experts knowledgeable in
mechanism design. Furthermore, from a modeling perspective, if unsophisticated
agents are unconvinced of incentive properties, they may strategize in
unpredictable ways.
\fi

\iffull To address both concerns, we \else We \fi
explore techniques from computer-aided verification to construct formal proofs
of incentive properties. Because formal proofs can be automatically checked,
agents do not need to manually check the properties, or even understand the
proof. To demonstrate, we present the verification of a sophisticated mechanism:
the generic reduction from Bayesian incentive compatible mechanism design to
algorithm design given by Hartline, Kleinberg, and Malekian. This mechanism
presents new challenges for formal verification, including essential use of
randomness from both the execution of the mechanism and from the prior type
distributions.
\iffull As an immediate consequence, our work also formalizes Bayesian incentive
compatibility for the entire family of mechanisms derived via this reduction.
Finally, as an intermediate step in our formalization, we provide the first
formal verification of incentive compatibility for the celebrated
Vickrey-Clarke-Groves mechanism. \fi
\end{abstract}







\begin{bibunit}[abbrvnat]

\section{Introduction}
\label{sec:intro}

Recent years have seen a surge of interest in mechanism design, as researchers
explore connections between computer science and economics. This fruitful
collaboration has produced many sophisticated mechanisms, including mechanism
deployed in high-stakes auctions. Many mechanisms satisfy properties that
\emph{incentivize} agents to behave in a straightforward and easily modeled
manner; the gold standard properties are \emph{dominant strategy truthful} (in
settings of complete information) and \emph{Bayesian incentive compatible} (in
settings of incomplete information).  While existing mechanisms are impressive
achievements, their increasing complexity raises two concerns.

The first concern is correctness. As mechanisms become more sophisticated,
proofs of their incentive properties have also grown in complexity, sometimes
involving delicate reasoning about randomization or tedious case analysis.
Complex mechanisms are also more prone to implementation errors. The second
concern is more subtle.  At its heart, mechanism design is algorithm design
together with a predictive model of how agents will decide to behave. Unlike
algorithm design, where correctness can be verified in a vacuum, the success of
a mechanism requires a realistic behavioral model of the participants. How will
agents behave when faced with a complex mechanism?

Different behavioral models assume different answers to this question. At one
extreme, we may assume that agents will coordinate to play a Nash equilibrium of
the game and we can study concepts like the \emph{price of anarchy}
\citep{Rou05,CK05}. However, Nash equilibria are generally not unique, requiring
coordination and communication to achieve \citep{HM07}. Even if
information is centralized, equilibria can be computationally hard to find
\citep{DGP09}. Assuming that agents play at a Nash equilibrium may be
unrealistic unless agents possess strong computational resources.

At the other extreme, we may ask for mechanisms which are dominant strategy
truthful or Bayesian incentive compatible. In such mechanisms, agents can do no
better than truthfully reporting their type, even in the worst case or in
expectation over the other agents' types. These solution concepts assume little
about the bidders: When interacting with truthful mechanisms, agents do not have
to engage in complicated counter-speculation, communication, or
computation---they merely have to tell the truth!

However, even with mechanisms that are dominant strategy truthful or Bayesian
incentive compatible, participating agents must still \emph{believe} that the
mechanism is truthful. For complicated mechanisms this is no small matter, as
the incentive properties may require significant domain expertise to verify.  We
are not the first to raise these concerns. \citet{Li2015} argued for simplicity
as a desired feature of auctions, proposing a formal definition; when designing
the FCC auction for reallocating radio spectrum, \citet{milgromslides} advocated
an ``\emph{obviously} strategy-proof'' mechanism.

However, some useful mechanisms are just too complex to be obvious.
\citet{gross2015common}, reporting on experiences with the Denver and New
Orleans school choice system, describe the problem:
\begin{quote}
  Both Denver and New Orleans leaders aggressively conveyed the optimal choosing
  strategy to parents, and many of the parents we spoke with had received the
  message. Parents reported to us that they were told to provide the full number
  of choices in their true order of preference. The problem was that few parents
  actually trusted this message. Instead, they commonly pursued strategies that
  matched their own inaccurate explanations of how the match worked.
\end{quote}
\citet{hassidim2015strategic} report similar behavior in a system for matching
Psychology students to slots in graduate programs in Israel. Even though the
mechanism is dominant-strategy incentive compatible, nearly $20\%$ of applicants
obviously misrepresented their preferences, with possibly more applicants
manipulating their reports in more subtle ways.  Instead of restricting
mechanisms, can we give users evidence for the incentive properties?

In this work, we consider using \emph{formal proofs} as certificates.  Formal
proofs bear a resemblance to pen-and-paper proofs, but they are constructed in a
rigorous fashion: They use a formal syntax, have a precise interpretation as a
mathematical proof, and can be built with a rich palette of computer-assisted
proof-construction tools. Compared to pen-and-paper proofs, the major benefit
of formal proofs is that once constructed, they can be checked independently and
fully automatically by a \emph{proof checker} program. 

Several previous works have explored formal methods for verifying mechanisms;
\citet{Kerber0R16} provide an extensive survey. Broadly speaking, prior work
falls into two groups. \emph{Automated} approaches check properties via
extensive search, guided by intelligent heuristics. These techniques are more
suited to verifying simpler properties of mechanisms, perhaps instantiated on a
specific input; properties like BIC lie beyond the reach of existing approaches.

More manual (sometimes called \emph{interactive}) techniques divide the
verification task into two separate stages. In the first stage, the formal proof
is \emph{constructed}. This step typically involves human assistance, perhaps
encoding the mechanism in a specific form or constructing a formal proof. With
the help of the human, these techniques can prove rich properties like BIC and
support the level of generality that is typical of existing proofs---say, for an
arbitrary number of agents, or for any type space. In the second stage, the
formal proof is \emph{checked}. This step proceeds fully automatically: a proof
checking program verifies that the formal proof is constructed correctly.  This
neat division of the verification task is a natural fit for mechanism design. We
could imagine that the mechanism designer---a sophisticated party who is
intimately familiar with the details of the proof---has the resources and
knowledge to construct the formal proof.  This proof could then be transmitted
to the agents, who can automatically check the proof with no knowledge of the
details.  

The main difference between manual techniques is in the amount of human labor
for proof construction, the most challenging phase. Existing verification
approaches formalize the proof at a level that is far more detailed than
existing proofs on paper, requiring extensive expertise in formal methods.
Furthermore, existing works focus on general correctness properties---the output
of a mechanism should be a partition, the prices should be non-negative, etc.,
rather than incentive properties.

In our work, we look to combine the best of both worlds: enabling a high level
of automation during proof construction, while supporting formal proofs that can
capture rich incentive properties. To demonstrate our approach, the primary
technical contribution of our paper is a challenging case study: a formal proof
of Bayesian incentive compatibility (BIC) for the generic reduction from
algorithm design to mechanism design by \citet{HKM11}.  This example is an
attractive proof-of-concept for several reasons.
\begin{enumerate}
  \item Both the reduction and the proof of Bayesian incentive compatibility are
    complex. The mechanism is far from obviously strategy proof---indeed, the
    proof is a research contribution first published at SODA 2011.
  \item It is a general reduction, so certifying its correctness once certifies
    the incentive properties for any instantiation of the reduction.
  \item It relies on truthfulness of the Vickrey-Clarke-Groves (VCG) mechanism.
    As part of our efforts, we provide the first formal verification of
    truthfulness for this classical mechanism.
  \item It employs randomization both within the algorithm and within the agent
    behavior---agent types are drawn from the known Bayesian prior. 
\end{enumerate}
The formal proofs bear a resemblance to the original proof, both easing
formalization and making the proofs more accessible to the mechanism design
community. 

To formalize the proofs, we adapt techniques from program verification.  We view
incentive properties as a property of the mechanism and the agent's payoff
function, both expressed as programs. Formal verification has developed
sophisticated tools and techniques for verifying program properties, but
general-purpose tools require significant manual work. Verifying even moderately
complex mechanisms seems well beyond the reach of current technology. To ease
the task, we view incentive properties as \emph{relational properties}:
statements about the relationship between the outputs in two runs of the same
program. Specifically, consider the program which calculates an agent's payoff
under the mechanism and assume agents play their true value in the first run,
while an agent may deviate arbitrarily in the second run. If the output in the
first run is at least the output in the second run, then the mechanism is
incentive compatible.

With this point of view, we can use tools specialized for relational properties.
Such tools are significantly easier to use and have achieved notable successes
for verifying proofs from differential privacy and cryptography.  We use
\THESYSTEM, a recently-developed programming language that can
express and check relational properties \citep{BartheGAHRS15}.  \THESYSTEM has
been used to verify differential privacy and basic truthfulness in simple
mechanisms under complete information, like the fixed price auction and the
random sampling mechanism of \citet{GHKSW06} for digital goods.

Our work goes significantly beyond prior efforts in several respects. First, the
mechanism we verify is significantly more complex than previously considered
mechanisms, and we analyze all uses of the reduction, rather than just a single
instance. Second, the mechanism operates in the partial information setting, so
the proof requires careful reasoning about randomization (from both the
mechanism and from the prior distribution on types).

The main strength of our approach lies in the high degree of automation during
\emph{proof construction}. Once the mechanism and payoff functions have been
encoded as programs, and once we have supplied some annotations, we can
construct most of the formal proof automatically with the aid of automated
solvers. However, there are a handful of particularly complex steps that
\THESYSTEM fails to automatically prove. To finish the proof, we manually build
a formal proof for these missing pieces using EasyCrypt, a proof assistant for
relational properties, and Coq, a general purpose proof assistant.\footnote{%
  Our formal proofs, along with code for the \THESYSTEM tool, are available
  online: \url{https://github.com/ejgallego/HOARe2/tree/master/examples/bic}}

\paragraph*{Related work.}
%
Closely related to our work, a recent paper by \citet{CKLR15} uses the theorem
prover Isabelle to verify basic properties of the celebrated
Vickrey-Clarke-Groves (VCG) mechanism. They consider general auction properties:
the prices should be non-negative, VCG should produce a partition of goods, etc.
Moreover, their framework can be used to automatically produce a correct,
executable implementation of the mechanism.  While their work demonstrates that
formal verification can be applied to verify properties of mechanisms, their
results are limited in two respects.  First, they do not consider incentive
properties, arguably the properties at the heart of mechanism design. Second,
they apply general techniques from computer-aided verification that are not
specifically tailored to mechanism design, requiring substantial effort to
produce the machine-checked proof. Our work uses verification techniques that
are particularly suited for incentive properties.

In the
\iffull appendix \else extended version \fi
we provide a primer on formal verification and discuss related work; a recent
survey by \citet{Kerber0R16} provides a comprehensive review of formal methods
for verifying mechanism design properties. The algorithmic game theory
literature has for the most part ignored the problem of \emph{verifying}
incentive properties, with a few notable exceptions. Recently, \citet{BP14}
define \emph{verifiably truthful mechanisms}.  Informally, such a mechanism is
selected from a fixed family of mechanisms such that for every truthful
mechanism in that family, a certificate showing truthfulness can be found in
polynomial time. \citet{BP14} consider mechanisms represented as decision trees
and show that for the one-dimensional facility location problem, truthfulness
for mechanisms in this class can be efficiently verified by linear programming.
In contrast, we investigate significantly more complex mechanisms in exchange
for forgoing worst-case polynomial time complexity.

\citet{Mua05} considers the problem of \emph{property testing} for truthfulness
in single parameter domains, which reduces to testing for a variant of
monotonicity. \citet{Mua05} gives a tester that can test whether there exist
payments that guarantee that truthful reporting is a dominant strategy with
probability $1-\epsilon$, given a poly$(1/\epsilon)$ number of arbitrary
evaluations of an allocation rule and assuming agents have uniformly random
valuations. In contrast, we assume direct access to the code specifying the
auction instead of merely black box access to the allocation rule, and we
achieve verification of exact truthfulness, not just approximate truthfulness.
We are also able to verify mechanisms in more complex settings, e.g., arbitrary
type spaces, randomized mechanisms, and arbitrary priors.

Our work is also related to the literature on automated mechanism design,
initiated by \citet{CS02} (see \citet{San03} or \citet[Chapter 6]{Con06} for an
introduction). In broad strokes, automated mechanism design seeks to generate
truthful mechanisms which optimize the designer's objectives. This is often
accomplished by solving explicitly for the distribution on outcomes defining a
mechanism using a mixed integer linear program encoding the incentive
constraints and objective, an NP hard problem that can often be solved
efficiently on typical instances \citep{Con06}.
Automated mechanism design targets a more difficult problem than we do: it seeks
not just to \emph{verify} the truthfulness of a given mechanism, but to
\emph{optimize} over all truthful mechanisms. However, these techniques have
some limitations: they produce explicit representations of mechanisms requiring
size exponential in the number of bidders, and they use an explicit integer
linear program, requiring a finite type space.
In contrast, by only requiring full automation for proof verification and not
proof construction, we are able to use a much more sophisticated
toolkit---including symbolic manipulation, not just numeric optimization---and
verify significantly more complex mechanisms that can have infinite outcome and
type spaces.

\section{Main example: RSM}
\label{sec:HKM}

As our main proof of concept, we verify that the Replica-Surrogate-Matching
(RSM) mechanism due to \citet{HKM11} is Bayesian incentive compatible. The RSM
mechanism reduces mechanism design to algorithm design: given an algorithm $A$
that takes in agents' reported types and selects an outcome, the RSM mechanism
turns $A$ into a Bayesian incentive compatible mechanism. Accordingly, our
formal proof will carry over to any instantiation of RSM.  We first review the
original proof by \citet{HKM11}. Then, we describe our verification process,
from pseudocode to a fully verified mechanism.


Let's begin with the standard notion of Bayesian incentive compatibility. We
assume there are $n$ agents, each with a \emph{type} $t_i$ drawn from some set
of types $T$.  Furthermore, we have access to a distribution $\mu$ on types, the
\emph{prior}.
A \emph{mechanism} is a (possibly randomized) function from the inputs---one per
agent---to a single \emph{outcome} $o$ from set $O$, and a real-valued
\emph{payment} $p_i$ for each agent.  Without loss of generality, we will assume
that the agents each report a type from $T$ as their input.  Agents have a
valuation $v(t,o)$ for type $t$ and outcome $o$. Agents have \emph{quasi-linear
  utility}: their utility for outcome $o$ and payment $p$ is $v(t, o) - p$.  We
will write $(s, t_{-i})$ for the vector  obtained by inserting $s$ into the
$i$th slot of $t$. Then, we want to check the following property.

\begin{definition}
  A mechanism $M$ is \emph{Bayesian incentive compatible (BIC)} if for every
  agent $i$ and types $t_i, t_i'$, we have
  \[
    \mathbb{E}_{t_{-i} \sim \mu^{n-1}} [ v(t_i, M(t_i, t_{-i})) - p_i(t_i, t_{-i}) ]
    \geq
    \mathbb{E}_{t_{-i} \sim \mu^{n-1}} [ v(t_i, M(t_i', t_{-i})) - p_i(t_i', t_{-i}) ] .
  \]
  The expectation is taken over the types $t_{-i}$ of the other agents (drawn
  independently from $\mu$) and any randomness used by the mechanism.
\end{definition}

\subsection{The RSM mechanism}

Now, let's consider the mechanism: the RSM mechanism in the ``idealized model''
by \citet{HKM11}. We will first recapitulate their proof, before explaining in
detail how we verify it.


RSM is a construction for turning an \emph{algorithm} $A : T^n \to O$ into a BIC
mechanism. The process is easy to describe: each agent individually transforms their
type $t_i$ to a \emph{surrogate type} $s_i$ by applying the
Replica-Surrogate-Matching procedure $R$.  This procedure also produces a
payment $p_i$ for the agent. Then, the surrogates $s$ are fed into the
algorithm $A$, which produces the final outcome.

\begin{figure}
  \begin{enumerate}
    \item Pick $i$ uniformly at random from $[m]$;
    \item Build a \emph{replica type profile} $\vec{r}$ by taking $m-1$ samples from
      $\mu$ for $\vec{r}_{-i}$, setting $r_i=t$;
    \item Build a \emph{surrogate type profile} $\vec{s}$ by taking $m$ samples from $\mu$;
    \item Build a bipartite graph with nodes $\vec{r}$, $\vec{s}$, and edges with
      weight
      \[
        w(r, s) = \mathbb{E}_{t_{-i} \sim \mu^{n-1}} [ v(r, A(s, t_{-i}))] ;
      \]
    \item Run the VCG procedure on the generated graph, and return the
      surrogate $s$ that is matched to the replica in slot $i$, and the
      appropriate payment $p$.
  \end{enumerate}

  \caption{Procedure $R$ with parameter $m$}
  \label{fig:rsm-algo}
\end{figure}

The procedure $R$ is described in \Cref{fig:rsm-algo}. Let $m$ be an integer
parameter---the number of replicas.
Given input type $t$, we take $m - 1$ independent samples from $\mu$, the
\emph{(r)eplicas}. We then take $m$ independent samples from $\mu$, the
\emph{(s)urrogates}. Finally, we select an index $i$ uniformly at random from
$[m]$, and place the original type $t$ in the $i$th ``slot'' of the replicas
$\vec{r}$.
%
We will consider the replicas as ``buyers'', and the surrogates as ``goods'',
and assign a numeric ``value'' for every pair of buyer and good. The value of
replica $r$ for surrogate $s$ is set to be
\begin{equation}
  \label{eq:weight}
  w(r, s) = \mathbb{E}_{t_{-i} \sim \mu^{n-1}} [ v(r, A(s, t_{-i}))] ,
\end{equation}
that is, the expected utility of an agent with true type $r$ reporting type $s$.
Finally, RSM runs the well-known Vickrey-Clarke-Groves mechanism
\citep{vickrey61,clarke71,groves73} to match each replica with a surrogate in
this market. The output is the surrogate matched to replica in slot $i$ (the
original type $t$), along with the payment charged.
\paragraph*{The original proof.}

The proof of BIC from \citet{HKM11} proceeds in two steps.  First, they show
that $R$ is \emph{distribution preserving}.

\begin{lemma}[\citet{HKM11}] \label{lem:distr-pres}
  Sampling a type $t \sim \mu$ as input to $R$ gives the same distribution
  ($\mu$) on the surrogates output.
\end{lemma}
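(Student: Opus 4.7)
The plan is to prove distribution preservation by exploiting a symmetry between the input type $t$ and the other replicas. The crucial observation is that when $t \sim \mu$, the full replica vector $\vec{r}$ is distributed exactly as $\mu^m$, independently of the slot index $i$: inserting a $\mu$-sample into position $i$ among $m-1$ i.i.d.\ $\mu$-samples yields $\mu^m$ regardless of $i$, so $(i, \vec{r}, \vec{s})$ has joint distribution $\mathrm{Uniform}([m]) \otimes \mu^m \otimes \mu^m$ with all three factors independent. This step essentially erases the distinguished role of slot $i$ in the replica vector.

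Next, I would condition on $(\vec{r}, \vec{s})$. The VCG matching $\sigma$ computed in step 5 is a (deterministic, modulo tie-breaking) function of the weighted bipartite graph, and hence of $(\vec{r}, \vec{s})$ alone. In particular $\sigma$ is independent of $i$. Since $\sigma$ is a permutation of $[m]$ and $i$ remains uniform on $[m]$ after conditioning, $\sigma(i)$ is uniform on $[m]$, so the output $s_{\sigma(i)}$ is uniformly distributed on the multiset $\{s_1, \ldots, s_m\}$. Marginalizing over $\vec{s} \sim \mu^m$, a uniformly chosen coordinate of $m$ i.i.d.\ $\mu$-samples is distributed as $\mu$, which gives the claim.

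The paper-level proof is short, but I expect two subtle obstacles for the formalization in \THESYSTEM. The first is the ``symmetry of insertion'' step: relating the sampling process $(\text{draw } t \sim \mu;\ \text{draw } \vec{r}_{-i} \sim \mu^{m-1};\ \text{set } r_i := t)$ to the flat product distribution $\mu^m$ requires a change-of-variables argument under the uniform choice of $i$, which is the kind of probabilistic rewriting that relational program logics handle clumsily and that likely has to be discharged in EasyCrypt or Coq. The second is the permutation step: one must state and exploit the fact that VCG returns a bijection between replicas and surrogates, so that precomposing with a uniform $i$ yields a uniform $\sigma(i)$; this property is not automatic from the definition of VCG and will need to be carried as an explicit side condition on the matching. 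Once these two facts are in hand, the remaining reasoning is a routine unfolding of the distribution of $s_{\sigma(i)}$ into a single $\mu$-draw.
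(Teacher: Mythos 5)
Your proof is correct and follows essentially the same route as the paper's: the insertion symmetry making $(i,\vec{r},\vec{s})$ jointly distributed as $\mathrm{Uniform}([m])\otimes\mu^m\otimes\mu^m$ is exactly the paper's ``deferred decision'' step of postponing the sample of $i$ until after VCG, and the permutation/uniform-coordinate argument is the paper's second paragraph. Your anticipated formalization obstacles also match what the paper actually does in its EasyCrypt stages (reordering the replicas and commuting the sample of $i$ past the VCG call, then using the permutation property of the VCG output).
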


\begin{proof}
  When $R$ constructs the list of buyers before applying VCG, the distribution
  over buyers is simply $m$ independent samples from $\mu$, no matter the value
  of $i$. So, we can delay sampling $i$ and selecting the surrogate until after
  running VCG (via the principle of deferred decision).
  VCG produces a perfect matching of replicas to surrogates, and the surrogates
  are also $m$ independent samples from $\mu$. So, sampling a random replica $i$
  and returning the matched surrogate is an unbiased sample from $\mu$.
\end{proof}

With the lemma in hand, \citet{HKM11} show that RSM is BIC.

\begin{theorem}[\citet{HKM11}] \label{thm:bic}
  The RSM mechanism is BIC.
\end{theorem}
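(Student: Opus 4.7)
The plan is to reduce BIC of RSM to truthfulness of the internal VCG instance, with Lemma~\ref{lem:distr-pres} bridging the two sources of randomness. Fix an agent $i$ with true type $t_i$ and an alternative report $t_i'$. When the other $n-1$ agents report truthfully ($t_{-i} \sim \mu^{n-1}$), each independently invokes $R$, so by Lemma~\ref{lem:distr-pres} applied once per other agent the resulting surrogates $s_{-i}$ are again iid $\mu^{n-1}$, regardless of agent $i$'s report.

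Next I would rewrite the agent's BIC expected utility. Conditioning on the surrogate $s_i$ output by $R(t'')$ (for $t'' \in \{t_i, t_i'\}$) and on the internal payment $p_i$, the expected valuation becomes $\mathbb{E}_{s_{-i}\sim\mu^{n-1}}[v(t_i, A(s_i, s_{-i}))] = w(t_i, s_i)$ by \eqref{eq:weight}. Hence the BIC utility of reporting $t''$ equals $\mathbb{E}_R[w(t_i, s_i) - p_i]$, where the outer expectation is over the internal randomness of $R$ (other replicas, other surrogates, the uniform slot, and any VCG randomness).

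This is precisely the expected utility, inside the VCG market of $R$, of the replica occupying the agent's slot, whose true valuation over surrogates is $w(t_i, \cdot)$ when the bid installed at that slot is $w(t'', \cdot)$. Reporting $t_i$ to $R$ installs the true valuation $w(t_i, \cdot)$, while reporting $t_i'$ installs the alternative $w(t_i', \cdot)$. Truthfulness of VCG holds pointwise in each realization of the other replicas, surrogates, and the random slot; taking expectations over these gives the desired BIC inequality.

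The main obstacle is careful bookkeeping of the several independent sources of randomness and the order in which expectations are taken: the prior samples of the other agents, the draws inside each invocation of $R$, the uniform slot, and any randomness in VCG itself. A self-contained proof that VCG is truthful for the specific single-item matching market used here is also needed as a prerequisite lemma, and independence between agent $i$'s report and the ``external'' draws (the other $t_{-i}$, the other replicas and surrogates, and the slot index) is what lifts pointwise dominant-strategy truthfulness to the expectation inequality required by BIC.
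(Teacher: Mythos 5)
Your proposal is correct and follows essentially the same route as the paper's proof: it uses Lemma~\ref{lem:distr-pres} to argue that the other agents' surrogates are distributed as $\mu^{n-1}$, so that the VCG weight $w(t_i,s)$ from \eqref{eq:weight} coincides with the agent's true expected utility for being matched to $s$, and then applies pointwise truthfulness of the internal VCG instance before taking expectation over the remaining randomness. Your version is somewhat more explicit about the bookkeeping of the independent randomness sources, but the decomposition and the two key ingredients are identical to the paper's.
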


\begin{proof}
  Consider bidder $i$ with type $t_i$, and fix the randomness for bidder $i$. In
  the VCG procedure of $R$, the value of $i$'s replica for surrogate $s$ is
  $w(t_i, s)$: the expected utility for submitting $s$ to $A$ while having true
  type $t_i$, assuming that all other inputs to $A$ are drawn from $\mu$.

  In the RSM mechanism, the other inputs to $A$ are computed by sampling a type
  $t_j \sim \mu$, and taking the surrogate produced by $R(t_j)$. By
  Lemma \ref{lem:distr-pres}, the distribution over surrogates is $\mu$. Therefore,
  $w(t_i,s)$ is bidder $i$'s expected utility in the RSM mechanism for ending up
  matched to $s$. Since VCG is incentive compatible, bidder $i$ has no incentive
  to deviate to any other bid $t_i'$. By taking expectation over the randomness
  of $i$, we get the result.
\end{proof}

Crucially, \Cref{thm:bic} relies on the truthfulness property of the VCG
mechanism.  We have also verified this property but we postpone our discussion
to 
\iffull \Cref{sec:vcg}\else the extended version\fi; the verification of RSM is
more interesting.

\section{Verifying RSM}
\label{sec:verif}
Now that we have seen the mechanism, we present our verification step by step.

\begin{minipage}[l]{0.6\textwidth}
\hspace{-1cm}
\begin{enumerate}
  \item We write the RSM mechanism as a program in the \THESYSTEM programming
    language.
  \item We annotate the program with assertions expressing the BIC
    property, and some additional intermediate facts (lemmas).
  \item The tool automatically generates the \emph{verification
      conditions} (VCs), which imply BIC.
  \item The tool uses automatic solvers to check the verification conditions;
    they may fail to prove some assertions.
  \item Finally, we prove the remaining verification conditions by
    using an interactive prover.
\end{enumerate}
\end{minipage}
\begin{minipage}[c]{0.4\textwidth}
\hspace{.5cm}
\scalebox{.45}{  \begin{tikzpicture}[auto,
    pdata/.style={rectangle, draw=blue, thick, fill=blue!10,
      text width=20em, text centered, rounded corners, minimum height=2.5em},
    pdatam/.style={rectangle, draw=blue, thick, fill=blue!20,
      text width=6em, text centered, rounded corners, minimum height=2.5em},
    line/.style={draw, thick, -latex',shorten >=1pt},
    relline/.style={draw, thin, -latex', dashed, double},
    ann/.style={rectangle, draw=green, thick, rectangle, rounded
      corners, fill=green!20, outer sep=0.34em, solid},
    pcomponent/.style={midway, draw=red, thick, rectangle, rounded corners,
      fill=red!10, minimum height=1.5em, outer sep=0.2em}]

    \matrix [column sep=35mm, row sep=12mm, every node/.style=pdata]
    {
      \node                (program)     {program encoding the mechanism};\\
      \node                (annotated)   {annotated program};\\
     \node                (verification)  {collection of VCs};\\
      \node             (automatic)       {VCs not solvable automatically}; \\
  \node                (interactive)       {proof of incentive property};\\
    };
    \begin{scope}[every path/.style=line,
                  every node/.style=pcomponent]
      \path          (program)    -- node         {expert adds assertions} (annotated);
      \path          (annotated) -- node  {proof checker generates
        VCs} (verification);
      \path          (verification)      -- node
      {automatic solver checks VCs} (automatic);
      \path (automatic)  -- node   {solve VCs in interactive solver} (interactive);
    \end{scope}
  \end{tikzpicture}
}
\end{minipage}

\medskip

The outcome of these five steps is a formal proof that the RSM
mechanism enjoys the BIC property. In the following, we will combine the
description of different steps in the same subsection.

\subsection*{Step 1: Modeling the mechanism}

To express RSM as a program, we will code a single agent's utility function when
running the RSM mechanism, when all the other agents report truthfully and have
types drawn from $\mu$.  Remembering that we consider truthfulness as
a \emph{relational property}, we will then reason about what happens when the agent
reports truthfully, compared to what happens when the agent deviates.

We model types and outcomes as drawn from (unspecified) sets $T$ and $O$, and we
assume an algorithm \verb|alg| mapping $T^n \to O$. We will consider what
happens when the first bidder deviates. This is without loss of generality: if
$j$ deviates, we can consider the RSM mechanism with \verb|alg| replaced by a
version \verb|alg'| that first rotates the $j$th bidder to the first slot, when
proving BIC for the first bidder under \verb|alg'| implies BIC for the $j$th
bidder under $A$.
For the values, we will assume an arbitrary valuation function \verb|value|
mapping $T \times O \to \mathbb{R}$. In the code, we will write \verb|mu| for
the prior distribution $\mu$.



Let's begin by coding the RSM transformation $R$, which transforms an agent's
type into a surrogate type and a payment. It will be convenient to separate the
randomness from $R$. We encode $R$ as a deterministic function \verb|Rsmdet|,
which takes as input the agent number \verb|j|, the random coins \verb|coins|,
and the input type \verb|report|.  We will have \verb|Rsmdet| take an additional
parameter \verb|truety| representing an agent's true type. This variable does
not show up in the code--RSM does not have access to this information---but will
be useful later for expressing Bayesian incentive compatibility as a relational
property.  We will model the slot as a natural number.

In 
\iffull \Cref{sec:vcg} \else the extended version \fi
we will discuss our treatment of VCG in more detail, but it is enough to know
that VCG takes a list of buyers and a list of goods.  VCG will output a
permutation of goods (representing the assignment), and a corresponding list of
payments.
\begin{figure}
  \begin{minipage}[b]{0.49\textwidth}
\begin{lstlisting}[style=numbers]
def Rsmdet(j, coins, truety, report) =
  (rs$_{-i}$, ss, i) = coins;
  vcgbuyers = (report, rs$_{-i}$);
  (surrs, pays) = Vcg(vcgbuyers, ss);
  return (surrs[j], pays[j])
\end{lstlisting}
\caption{Defining RSM}
\label{fig:rsm}
\end{minipage}
\begin{minipage}[b]{0.49\textwidth}
\begin{lstlisting}[style=numbers]
def Expwts(j, r, s) =
  sample others$_{-j}$ = mu$^{n - 1}$;
  algInput = (s, others$_{-j}$);
  outcome = alg(algInput);
  return expect_num { value(r, outcome) }
\end{lstlisting}
\caption{Defining weights}
\label{fig:weights}
\end{minipage}
\end{figure}
In \Cref{fig:rsm}, bolded words are keywords and primitive operations of
\THESYSTEM.  For a brief explanation, line (2) names the three components of
\verb|coins|: the replicas \verb|rs|$_{-i}$, the surrogates \verb|ss|, and the
slot \verb|i|; line (3) puts the agent's input type \verb|report| in the proper
slot for the replicas; line (4) calls VCG on the list of buyers \verb|vcgbuyers|
produced at line (3) and the list of surrogates \verb|ss| as goods; and line (5)
returns the surrogate and payment.  

The \verb|Expwts| function in \Cref{fig:weights} implements  the $w$
function from \Cref{eq:weight}, with the additional parameter \verb|j| to
indicate the agent. In \Cref{fig:weights}, line (2) samples $n-1$ types
\verb|others|$_{-j}$ from $\mu$ for the other agents. These are the types on
which the expectation is taken in \Cref{eq:weight}. Line (4) uses the algorithm
\verb|alg| to compute the outcome \verb|outcome| when the agent \verb|j| report
type \verb|s|. Finally, the \verb|expect_num| on line (5) takes the expected
value of the distribution over reals defined by evaluating the value function
\verb|value| on the true type \verb|r| and on the randomized outcome of the
\verb|alg|.

To check the BIC property, we will code the expected utility for the first
bidder and then check that it is maximized by truthful reporting. To
break down the code, we will suppose that the function takes in a list of
functions \verb|othermoves| that transform each of the other bidder's type.

\begin{figure}[h!]
  \begin{minipage}[b]{0.49\textwidth}
\begin{lstlisting}[style=numbers]
def Util(othermoves, myty, mybid) =
  return (expect rsmcoins Helper)

def Helper(coins) =
  (mysur, mypay) =
    Rsmdet(1, coins, myty, mybid);
  myval = expect_num {
    for $i = 1 \dots n-1$:
      sample othersurs[i] =
        (sample otherty = mu;
        othermoves[i](otherty));
    algInput = (mysur, othersurs);
    outcome = alg(algInput);
    value(myty, outcome) };
  return (myval - mypay)
\end{lstlisting}
\caption{Defining utility}
\label{fig:utility}
\end{minipage}
\begin{minipage}[b]{0.49\textwidth}
\begin{lstlisting}[style=numbers]
def Others(j, t) =
  sample coins = rsmcoins;
  (s, p) = Rsmdet(j, coins, t);
  return s

def MyUtil(ty,bid) = Util(Others,ty,bid)
\end{lstlisting}
\caption{Defining other reports}
\label{fig:others}
\end{minipage}
\end{figure}

The distribution \verb|rsmcoins| defines the distribution over the coins to $R$,
i.e., sampling the replicas $\vec{r}_{-i}$, the surrogates $\vec{s}$, and the
coin $i$.  We encoded this distribution in \THESYSTEM, but we elide it for lack
of space.  In the code in \Cref{fig:utility}, on line (2) we take expectation of
the function \verb|Helper| over the distribution \verb|rsmcoins|, with
\verb|expect|.  In \verb|Helper|, we then call \verb|Rsmdet| on line (6) to
compute the surrogate and payment for the agent, passing $1$ since we are
calculating the utility for the first agent.  We sample the other agents' types
and transform them on lines (9--11), and we take expectation of the first
agent's value for the outcome on lines (7--14).  Finally, we subtract off the
payment on line (15), giving the final utility for the first agent.

To complete our modeling of RSM, in \Cref{fig:others} we plug in \verb|Others|
into the utility function: it simply takes an agent number and a type as input,
samples the coins from \verb|rsmcoins|, and returns the surrogate from calling
\verb|Rsmdet|.
So far, we have just written code describing how to implement the RSM mechanism
and how to calculate the utility for a single bidder. Now, we express the BIC
property as a property about this program and check it with \THESYSTEM.

\subsection*{Step 2: Adding assertions}
We specify properties in \THESYSTEM by annotating variable and functions with
assertions of the form $\{ x :: Q \mid \phi \}$, read as ``$x$ is an element of
set $Q$ and satisfies the logical formula $\phi$''. These assertions serve two
purposes: (1) they express facts to be proved about the code and (2) they assert
mathematical facts about primitive operations like \verb|expect| and
\verb|expect_num|. The system will then formally verify that the first kind of
annotations are correct, while assuming the assertions of the second kind as
axioms.

A key feature of \THESYSTEM is that the assertion $\phi$ is \emph{relational}:
it can refer to two copies of each variable $x$, denoted $\l{x}$ and $\r{x}$.
Roughly, we may make assertions about two runs of the same program where in the
first program we use variables $\l{x}$, and in the second run we use variables
$\r{x}$.\iffull\footnote{%
  These annotations are known as \emph{relational refinement types} in the
  programming language literature. We will call them assertions or annotations
  to avoid clashing with agent types.}\fi{}
For instance, truthfulness corresponds to the following assertions:
\begin{align}
  \{ ty :: T \mid \l{ty} = \r{ty} \} &
  \tag{true type is equal on both runs} \\
  \{ bid :: T \mid \l{bid} = \l{ty} \} &
  \tag{bid is the true type in the first run} \\
  \{ utility :: \mathbb{R} \mid \l{utility} \geq \r{utility} \} &
  \tag{utility is higher in the first run}
\end{align}

Our goal is to check these assertions for the function \verb|MyUtil|, which
computes an agent's utility in expectation over the other types. Along the way
we will use several intermediate facts, encoded as assertions in \THESYSTEM.
Assertions on primitive operations, like \verb|expect| and \verb|expect_num|,
are the axioms. Assertions on larger chunks of code are proved correct from the
assertions on the subcomponents.

\paragraph*{Monotonicity of expectation.}

Since the BIC property refers to \emph{expected} utility, we use an expectation
operation \verb|expect| when computing an agent's utility (line (2) of the
\verb|Util| code).  To show BIC, we need a standard fact about
\emph{monotonicity} of expected value: for functions $f \leq g$, $\mathbb{E}[f]
\leq \mathbb{E}[g]$ taken over the same distribution. This can be encoded with
an annotation for \verb|expect|:
\begin{align*}
  \Distr{\{ c :: C \mid \l{c} = \r{c} \}}
\rightarrow
\{ f :: C \to \mathbb{R} \mid \forall x.\ \l{f}(x) \leq \r{f}(x) \}
  \rightarrow
  \{ e :: \mathbb{R} \mid \l{e} \leq \r{e} \} .
\end{align*}

This annotation indicates that \verb|expect| is a function that takes two
arguments and returns a real number. In each of the three components, the
annotation before the bar specifies the type of the value: The first argument
is a distribution over $C$, the second argument is a real-valued function $C \to
\mathbb{R}$, and the return value is a real number. The logical formulas after
the pipe describe how two runs of the expectation function are related. The
first component states that in the two runs, the distributions are the same.
The second component states that the function $f$ in the first run is pointwise
less than $f$ in the second run. The final component asserts that the expected
value---a real number---is less on the first run than on the second run.

If think of the distribution as being over the coins \verb|rsmcoins|, this fact
allows us to prove deterministic truthfulness for each setting of the coins,
then take expectation over the coins in order to show truthfulness in
expectation. This is what we need to prove for the BIC property, and is
precisely the first step in the original proof of \Cref{thm:bic}.

\paragraph*{Distribution preservation.}

When we consider a single agent, truthful bidding may not be BIC for arbitrary
transformations of the other agents' types (\verb|othermoves| in the \verb|Util|
code). As indicated by \Cref{lem:distr-pres}, we also need the transformation to
be distribution preserving: the output distribution on surrogates must be the
same as the distribution on input types.

Much as we did above, we can capture this property with appropriate annotations.
While we have so far used rather simple formulas $\phi$  that only mention
variables in $\{ x :: T \mid \phi \}$, in general the formulas $\phi$ can
describe assertions about programs.\iffull\footnote{%
  Of course, we need to actually \emph{check} the assertions eventually, whether
  by automated solvers or manual techniques.}\fi{}
We can annotate the \verb|othermoves| argument to \verb|Util| to require
distribution independence:
\[
  \{
  \lstt{othermoves} : \List{(T \to \Distr{T})}
  \mid
  \forall j \in [n].\; (\lstt{sample ot = mu; othermoves[j](ot)}) = \lstt{mu}
  \}
\]
To read this, \verb|othermoves| is a list of functions $f_j$ that take a type and
return a distribution on types, such that if we sample a type from \verb|mu|
and feed it to $f_j$, the resulting distribution (including randomness over the
initial choice of type) is equal to \verb|mu|. In other words, this asserts the
distribution preservation property of \Cref{lem:distr-pres} for each of the
other agent's transformations.

\paragraph*{Facts about VCG.}
Recall that \verb|Vcg| takes a list of bidders and a list of goods, and produces
a permutation of the goods and a list of payments as output. In our case, the
bidders and goods are both represented as types in $T$, so we can annotate the
\verb|Vcg| as:
\[
  \{ buys :: \List{T} \} \to
  \{ goods :: \List{T} \} \to
  \{ (alloc, pays) :: \List{T} \times \List{\mathbb{R}} \mid \lstt{vcgTruth} \land
  \lstt{vcgPerm}
  \} .
\]
The two assertions  $\lstt{vcgTruth}$ and  $\lstt{vcgPerm}$ reflect two facts
about VCG. The first is that VCG is incentive compatible; this can be encoded
like we have already seen, with a slight twist: We require that VCG is IC for a
deviation by \emph{any} player rather than just the first player, since
the possibly deviating player may be in any slot.
More precisely, we define the formula
\begin{align*}
  \lstt{vcgTruth} &:= \forall \lstt{j} \in [\lstt{m}].\;
  (\lstt{bids}_{-j, 1} = \lstt{bids}_{-j, 2}) \Longrightarrow
  \\
  & \lstt{Expwts(j, bids$_1$[j], alloc$_1$[j])} - \lstt{pays$_1$[j]}
  \geq
  \lstt{Expwts(j, bids$_1$[j], alloc$_2$[j])}) - \lstt{pays$_2$[j]} .
\end{align*}
We treat the bid in the first run (\lstt{bids$_1$[j]}) as the true type, and the
bid on the second run (\lstt{bids$_2$[j]}) as a possible deviation---this is why
we evaluate the $j$th bidder's expected utility using the same true type in the
two runs.
The second fact we use is that VCG \emph{matches} buyers to the goods. In fact,
since the number of goods (surrogates) and the number of buyers (replicas) are
equal, VCG produces a perfect matching. We express this by asserting that VCG
outputs an assignment that is a permutation of the goods:
\[
  \lstt{vcgPerm} :=
  \lstt{isPerm goods$_1$  alloc$_1$}
  \land
  \lstt{isPerm goods$_2$  alloc$_2$} .
\]
We verify these properties for a general version of VCG. The verification
follows much like the current verification; we discuss the details in
\iffull \Cref{sec:vcg}\else the extended version\fi.

\subsection*{Step 3: Handling proof obligations}
After providing the annotations, \THESYSTEM is able to automatically check most
of the annotations with \emph{SMT solvers}\iffull\footnote{%
  Satisfiability-Modulo-Theory, see e.g., \citep{handbook-sat} for a survey.}\fi---fully
automated solvers that check the validity of logical formulas.  Such solvers are
a staple of modern formal verification. While the underlying problem is often
undecidable, modern solvers employ sophisticated heuristics that can efficiently
handle large formulas in practice.


We are able to use SMT solvers to automatically check all but three proof
obligations; for these three facts the SMT solvers time out without finding a
proof. The first two are uninteresting, and we manually construct the formal
proof using the Coq proof assistant. The last obligation is more interesting: it
corresponds to \Cref{lem:distr-pres}.  Concretely, when we define an agent's
expected utility
\[
\lstt{def MyUtil(ty,bid) = Util(Others,ty,bid)} ,
\]
recall that \verb|Util| asserts that \verb|Others| is distribution
preserving. This is precisely \Cref{lem:distr-pres},
and the automated solvers fail to prove this automatically.

To handle this assertion we use a more manual tool called EasyCrypt
\citep{Barthe11,BartheDGKSS13}, a proof assistant that allows the user prove
equivalence of two programs $A$ and $B$ by manually transforming the source code
of $A$ until the source code is identical to $B$.\iffull\footnote{%
  This is a common proof technique in cryptographic proofs, known as \emph{game
    hopping} \citep{Bellare:2006,Halevi:2005}.}\fi{}
We prove that \verb|Others| is equivalent to the program that simply samples
from \verb|mu| by transforming the code for \verb|Others| (including the code
sampling the coins of the mechanism, \verb|rsmcoins|) in several stages. We
present the code in \Cref{fig:games} with two replicas, for simplicity.

\begin{figure}
\begin{minipage}[t]{0.21\textwidth}
\begin{lstlisting}[style=nonumbers]
def stage1 =
 sample ot = mu;
 Others(ot)
\end{lstlisting}
\end{minipage}
\begin{minipage}[t]{0.27\textwidth}
\begin{lstlisting}[style=nonumbers]
def stage2 =
 sample ot = mu;
 sample r' = mu;
 sample s1 = mu;
 sample s2 = mu;
 sample i = flip;

 if i then
  (r1,r2) = (ot,r');
 else
  (r1,r2) = (r',ot);

 bs = (r1,r2);
 gs = (s1,s2);

 (ss,ps) = Vcg(bs,gs);
 (o1,o2) = ss;

 if i then o1 else o2
\end{lstlisting}
\end{minipage}
\begin{minipage}[t]{0.27\textwidth}
\begin{lstlisting}[style=nonumbers]
def stage3 =
 sample ot = mu;
 sample r' = mu;
 sample s1 = mu;
 sample s2 = mu;

 (r1,r2) = (ot,r');

 bs = (r1,r2);
 gs = (s1,s2);

 (ss,ps) = Vcg(bs,gs);
 (o1,o2) = ss;

 sample i = flip;
 if i then o1 else o2
\end{lstlisting}
\end{minipage}
\begin{minipage}[t]{0.22\textwidth}
\begin{lstlisting}[style=nonumbers]
def stage4 =
 sample s1 = mu;
 sample s2 = mu;
 sample i = flip;
 if i then s1
      else s2
\end{lstlisting}
\end{minipage}
\caption{Code transformations to prove \Cref{lem:distr-pres}.
  \label{fig:games}}
\end{figure}

The proof boils down to showing that each step transforms a program to an
equivalent program. Our starting point is \verb|stage1|, the program that
samples an agent's type from \verb|mu| and runs \verb|Others| on the sampled
value.  Unfolding the definition of \verb|Others|, \verb|Rsmdet|,
\verb|rsmcoins| and including the code that puts the agent's input type in the
proper slot for the replicas, we obtain program \verb|stage2|. From there, the
main step is to show that we don't need to place the replicas in a random order
before calling \verb|Vcg|.  Then, we can move the sampling for \verb|i| down
past the \verb|Vcg| call, giving \verb|stage3|.  Finally, using the fact that
the output assignment \verb|ss| of \verb|Vcg| is a permutation of the goods
\verb|(s1, s2)|, we obtain the program  \verb|stage4| and conclude that this is
equivalent to taking a single sample from \verb|mu|. This chain of
transformations has been verified with EasyCrypt.

\section{Perspective}
\label{sec:perspective}

Now that we have presented our verification of the RSM mechanism, what have we
learned and what does formal verification have to offer mechanism design going
forward?  In our experience, while formal verification of game theoretic
mechanisms is by no means trivial, verification tools are maturing to a point
where practical verification of complex mechanisms can be envisioned.
Our verification of RSM, for instance, involved only coding the utility function
and adding annotations, most of which can be checked automatically. The most
time-consuming part was manually proving the last few assertions.

At the same time, the range of mechanisms that can be verified is less clear.
There is an art to encoding a mechanism in the right way, and some mechanisms
are easier to verify than others. Since we are trying to verify proofs, the
crucial factor is the complexity of the \emph{proof} rather than the complexity
of the mechanism. Clean proofs where, each step reasons about localized parts of
the program, are more amenable to verification; proof patterns---like universal
truthfulness---also help.

In sum, formal verification can manage the increasing complexity of mechanisms
by formally proving incentive properties for everyone---mechanism designers,
mechanism users, and even mechanism programmers. We believe that the tools to
verify one-shot mechanisms are already here. So, we propose a challenge: Try
using tools like \THESYSTEM to verify your own mechanisms, putting formal
verification techniques to the test. We hope that one day soon, verification for
mechanisms will be both easy and commonplace.

\paragraph*{Acknowledgments.}
We thank the anonymous reviewers for their careful reading; their suggestions
have significantly improved this work. We especially thank Ran Shorrer for
pointing out empirical evidence that agents may manipulate their reports even
when the mechanism is truthful. This work was partially supported by NSF grants
TWC-1513694, CNS-1237235, CNS-1565365  and a grant from the Simons Foundation
($\#360368$ to Justin Hsu).




\newcommand{\SortNoop}[1]{}
\begin{thebibliography}{26}
\providecommand{\natexlab}[1]{#1}
\providecommand{\url}[1]{\texttt{#1}}
\expandafter\ifx\csname urlstyle\endcsname\relax
  \providecommand{\doi}[1]{doi: #1}\else
  \providecommand{\doi}{doi: \begingroup \urlstyle{rm}\Url}\fi

\bibitem[Barrett et~al.(2009)Barrett, Sebastini, Seshia, and
  Tinelli]{handbook-sat}
C.~Barrett, R.~Sebastini, S.~A. Seshia, and C.~Tinelli.
\newblock Satisfiability modulo theories.
\newblock In \emph{Handbook of satisfiability}, volume 185. IOS press, 2009.

\bibitem[Barthe et~al.(2011)Barthe, Gr{\'e}goire, Heraud, and
  B{\'e}guelin]{Barthe11}
G.~Barthe, B.~Gr{\'e}goire, S.~Heraud, and S.~Z. B{\'e}guelin.
\newblock Computer-aided security proofs for the working cryptographer.
\newblock In \emph{{IACR} International Cryptology Conference {(CRYPTO)}, Santa
  Barbara, California}, pages 71--90, 2011.

\bibitem[Barthe et~al.(2014)Barthe, Dupressoir, Gr{\'{e}}goire, Kunz, Schmidt,
  and Strub]{BartheDGKSS13}
G.~Barthe, F.~Dupressoir, B.~Gr{\'{e}}goire, C.~Kunz, B.~Schmidt, and P.~Strub.
\newblock Easy{C}rypt: {A} tutorial.
\newblock In \emph{Foundations of Security Analysis and Design {VII} - {FOSAD}
  2012/2013 Tutorial Lectures}, volume 8604 of \emph{Lecture Notes in Computer
  Science}, pages 146--166. Springer-Verlag, 2014.

\bibitem[Barthe et~al.(2015)Barthe, Gaboardi, Gallego~Arias, Hsu, Roth, and
  Strub]{BartheGAHRS15}
G.~Barthe, M.~Gaboardi, E.~J. Gallego~Arias, J.~Hsu, A.~Roth, and P.-Y. Strub.
\newblock \href{http://arxiv.org/abs/1407.6845}{Higher-order approximate
  relational refinement types for mechanism design and differential privacy}.
\newblock In \emph{{ACM} {SIGPLAN--SIGACT} {S}ymposium on {P}rinciples of
  {P}rogramming {L}anguages ({POPL}), Mumbai, India}, pages 55--68, 2015.

\bibitem[Bellare and Rogaway(2006)]{Bellare:2006}
M.~Bellare and P.~Rogaway.
\newblock
  \href{https://www.iacr.org/archive/eurocrypt2006/40040415/40040415.pdf}{The
  security of triple encryption and a framework for code-based game-playing
  proofs}.
\newblock volume 4004 of \emph{Lecture Notes in Computer Science}, pages
  409--426. Springer-Verlag, 2006.

\bibitem[Br{\^a}nzei and Procaccia(2014)]{BP14}
S.~Br{\^a}nzei and A.~D. Procaccia.
\newblock Verifiably truthful mechanisms.
\newblock In \emph{{ACM} {SIGACT} {I}nnovations in {T}heoretical {C}omputer
  {S}cience (ITCS), Princeton, New Jersey}, 2014.

\bibitem[Caminati et~al.(2015)Caminati, Kerber, Lange, and Rowat]{CKLR15}
M.~B. Caminati, M.~Kerber, C.~Lange, and C.~Rowat.
\newblock \href{http://doi.acm.org/10.1145/2764468.2764511}{Sound auction
  specification and implementation}.
\newblock In \emph{{ACM} {SIGecom} {C}onference on {E}conomics and
  {C}omputation (EC), Portland, Oregon}, pages 547--564, 2015.

\bibitem[Christodoulou and Koutsoupias(2005)]{CK05}
G.~Christodoulou and E.~Koutsoupias.
\newblock \href{http://dl.acm.org/citation.cfm?id=1060600}{The price of anarchy
  of finite congestion games}.
\newblock In \emph{{ACM} {SIGACT} {S}ymposium on {T}heory of {C}omputing
  (STOC), Baltimore, Maryland}, pages 67--73. ACM, 2005.

\bibitem[Clarke(1971)]{clarke71}
E.~H. Clarke.
\newblock Multipart pricing of public goods.
\newblock \emph{Public Choice}, 11\penalty0 (1):\penalty0 17--33, 1971.

\bibitem[Conitzer(2006)]{Con06}
V.~Conitzer.
\newblock \emph{Computational aspects of preference aggregation}.
\newblock PhD thesis, IBM, 2006.

\bibitem[Conitzer and Sandholm(2002)]{CS02}
V.~Conitzer and T.~Sandholm.
\newblock Complexity of mechanism design.
\newblock In \emph{Conference on Uncertainty in Artificial Intelligence
  ({UAI}), Edmonton, Alberta}, pages 103--110. Morgan Kaufmann Publishers Inc.,
  2002.

\bibitem[Daskalakis et~al.(2009)Daskalakis, Goldberg, and Papadimitriou]{DGP09}
C.~Daskalakis, P.~W. Goldberg, and C.~H. Papadimitriou.
\newblock The complexity of computing a {N}ash equilibrium.
\newblock \emph{SIAM Journal on Computing}, 39\penalty0 (1):\penalty0 195--259,
  2009.

\bibitem[Goldberg et~al.(2006)Goldberg, Hartline, Karlin, Saks, and
  Wright]{GHKSW06}
A.~V. Goldberg, J.~D. Hartline, A.~R. Karlin, M.~Saks, and A.~Wright.
\newblock Competitive auctions.
\newblock \emph{Games and Economic Behavior}, 55\penalty0 (2):\penalty0
  242--269, 2006.

\bibitem[Gross et~al.(2015)Gross, DeArmond, and Denice]{gross2015common}
B.~Gross, M.~DeArmond, and P.~Denice.
\newblock
  \href{http://www.crpe.org/publications/common-enrollment-parents-and-school-choice-early-evidence-denver-and-new-orleans}{Common
  enrollment, parents, and school choice: Early evidence from denver and new
  orleans. making school choice work series.}
\newblock Technical report, Center on Reinventing Public Education ({CRPE}),
  University of Washington, 2015.

\bibitem[Groves(1973)]{groves73}
T.~Groves.
\newblock Incentives in teams.
\newblock \emph{Econometrica: Journal of the Econometric Society}, 41\penalty0
  (4):\penalty0 617--631, 1973.

\bibitem[Halevi(2005)]{Halevi:2005}
S.~Halevi.
\newblock \href{https://eprint.iacr.org/2005/181.pdf}{A plausible approach to
  computer-aided cryptographic proofs}.
\newblock Cryptology ePrint Archive, Report 2005/181, 2005.

\bibitem[Hart and Mansour(2007)]{HM07}
S.~Hart and Y.~Mansour.
\newblock The communication complexity of uncoupled nash equilibrium
  procedures.
\newblock In \emph{{ACM} {SIGACT} {S}ymposium on {T}heory of {C}omputing
  (STOC), San Diego, California}, pages 345--353. ACM, 2007.

\bibitem[Hartline et~al.(2011)Hartline, Kleinberg, and Malekian]{HKM11}
J.~D. Hartline, R.~Kleinberg, and A.~Malekian.
\newblock Bayesian incentive compatibility via matchings.
\newblock In \emph{{ACM--SIAM} {S}ymposium on {D}iscrete {A}lgorithms (SODA),
  San Francisco, California}, pages 734--747. SIAM, 2011.

\bibitem[Hassidim et~al.(2016)Hassidim, Marciano-Romm, Romm, and
  Shorrer]{hassidim2015strategic}
A.~Hassidim, D.~Marciano-Romm, A.~Romm, and R.~I. Shorrer.
\newblock \href{https://ssrn.com/abstract=2784659}{`strategic behavior' in a
  strategy-proof environment}.
\newblock Working paper, 2016.

\bibitem[Kerber et~al.(2016)Kerber, Lange, and Rowat]{Kerber0R16}
M.~Kerber, C.~Lange, and C.~Rowat.
\newblock \href{http://arxiv.org/abs/1603.02478}{An introduction to mechanized
  reasoning}.
\newblock \emph{CoRR}, abs/1603.02478, 2016.

\bibitem[Li()]{Li2015}
S.~Li.
\newblock \href{http://dx.doi.org/10.2139/ssrn.2560028}{Obviously
  strategy-proof mechanisms}.
\newblock \emph{{SSRN} Electronic Journal}.

\bibitem[Milgrom and Segal(2014)]{milgromslides}
P.~Milgrom and I.~Segal.
\newblock
  \href{http://www.as.huji.ac.il/sites/default/files/DA\%20Heuristic\%20Auctions\%20June-2014.pdf}{Deffered
  acceptance auctions and radio spectrum reallocation}, 2014.

\bibitem[Mu'alem(2005)]{Mua05}
A.~Mu'alem.
\newblock A note on testing truthfulness.
\newblock In \emph{Electronic Colloquium on Computational Complexity (ECCC)},
  number 130, 2005.

\bibitem[Roughgarden(2005)]{Rou05}
T.~Roughgarden.
\newblock \emph{Selfish routing and the price of anarchy}, volume 174.
\newblock MIT Press, 2005.

\bibitem[Sandholm(2003)]{San03}
T.~Sandholm.
\newblock Automated mechanism design: A new application area for search
  algorithms.
\newblock In \emph{International Conference on Principles and Practice of
  Constraint Programming (CP), Kinsale, Ireland}, pages 19--36.
  Springer-Verlag, 2003.

\bibitem[Vickrey(1961)]{vickrey61}
W.~Vickrey.
\newblock Counterspeculation, auctions, and competitive sealed tenders.
\newblock \emph{The Journal of Finance}, 16\penalty0 (1):\penalty0 8--37, 1961.

\end{thebibliography}


\newcommand{\SortNoop}[1]{}
\begin{thebibliography}{24}
\providecommand{\natexlab}[1]{#1}
\providecommand{\url}[1]{\texttt{#1}}
\expandafter\ifx\csname urlstyle\endcsname\relax
  \providecommand{\doi}[1]{doi: #1}\else
  \providecommand{\doi}{doi: \begingroup \urlstyle{rm}\Url}\fi

\bibitem[Alur et~al.(2002)Alur, Henzinger, and Kupferman]{Alur:2002:ATT}
R.~Alur, T.~A. Henzinger, and O.~Kupferman.
\newblock Alternating-time temporal logic.
\newblock \emph{Journal of the {ACM}}, 49\penalty0 (5):\penalty0 672--713,
  Sept. 2002.

\bibitem[Bai et~al.(2013)Bai, Tadjouddine, Payne, and
  Guan]{DBLP:conf/facs2/BaiTPG13}
W.~Bai, E.~M. Tadjouddine, T.~R. Payne, and S.-U. Guan.
\newblock A proof-carrying code approach to certificate auction mechanisms.
\newblock In \emph{FACS}, volume 8348 of \emph{Lecture Notes in Computer
  Science}, pages 23--40. Springer-Verlag, 2013.

\bibitem[Barthe et~al.(2009)Barthe, Gr{\'e}goire, and
  {Zanella-B{\'e}guelin}]{Barthe:2009}
G.~Barthe, B.~Gr{\'e}goire, and S.~{Zanella-B{\'e}guelin}.
\newblock
  \href{http://research.microsoft.com/pubs/185309/Zanella.2009.POPL.pdf}{Formal
  certification of code-based cryptographic proofs}.
\newblock In \emph{{ACM} {SIGPLAN--SIGACT} {S}ymposium on {P}rinciples of
  {P}rogramming {L}anguages ({POPL}), Savannah, Georgia}, 2009.

\bibitem[Barthe et~al.(2014)Barthe, Dupressoir, Gr{\'{e}}goire, Kunz, Schmidt,
  and Strub]{BartheDGKSS13}
G.~Barthe, F.~Dupressoir, B.~Gr{\'{e}}goire, C.~Kunz, B.~Schmidt, and P.~Strub.
\newblock Easy{C}rypt: {A} tutorial.
\newblock In \emph{Foundations of Security Analysis and Design {VII} - {FOSAD}
  2012/2013 Tutorial Lectures}, volume 8604 of \emph{Lecture Notes in Computer
  Science}, pages 146--166. Springer-Verlag, 2014.

\bibitem[Barthe et~al.(2015)Barthe, Gaboardi, Gallego~Arias, Hsu, Roth, and
  Strub]{BartheGAHRS15}
G.~Barthe, M.~Gaboardi, E.~J. Gallego~Arias, J.~Hsu, A.~Roth, and P.-Y. Strub.
\newblock \href{http://arxiv.org/abs/1407.6845}{Higher-order approximate
  relational refinement types for mechanism design and differential privacy}.
\newblock In \emph{{ACM} {SIGPLAN--SIGACT} {S}ymposium on {P}rinciples of
  {P}rogramming {L}anguages ({POPL}), Mumbai, India}, pages 55--68, 2015.

\bibitem[Benton(2004)]{Benton:2004}
N.~Benton.
\newblock
  \href{http://research.microsoft.com/en-us/um/people/nick/correctnessfull.pdf}{Simple
  relational correctness proofs for static analyses and program
  transformations}.
\newblock In \emph{{ACM} {SIGPLAN--SIGACT} {S}ymposium on {P}rinciples of
  {P}rogramming {L}anguages ({POPL}), Venice, Italy}, pages 14--25, 2004.

\bibitem[Chadha et~al.(2007)Chadha, Cruz-Filipe, Mateus, and
  Sernadas]{Chadha:2007}
R.~Chadha, L.~Cruz-Filipe, P.~Mateus, and A.~Sernadas.
\newblock Reasoning about probabilistic sequential programs.
\newblock \emph{Theoretical Computer Science}, 379\penalty0 (1--2):\penalty0
  142--165, 2007.

\bibitem[Clarke(1971)]{clarke71}
E.~H. Clarke.
\newblock Multipart pricing of public goods.
\newblock \emph{Public Choice}, 11\penalty0 (1):\penalty0 17--33, 1971.

\bibitem[Clarkson and Schneider(2008)]{Clarkson:2008}
M.~Clarkson and F.~Schneider.
\newblock Hyperproperties.
\newblock In \emph{{IEEE} {C}omputer {S}ecurity {F}oundations {S}ymposium
  ({CSF}), Pittsburgh, Pennsylvania}, 2008.

\bibitem[Cook(1978)]{Cook:1978}
S.~Cook.
\newblock Soundness and completeness of an axiom system for program
  verification.
\newblock \emph{SIAM Journal on Computing}, 7\penalty0 (1):\penalty0 70--90,
  1978.

\bibitem[Flanagan and Saxe(2001)]{flanagan2001}
C.~Flanagan and J.~B. Saxe.
\newblock Avoiding exponential explosion: Generating compact verification
  conditions.
\newblock In \emph{{ACM} {SIGPLAN--SIGACT} {S}ymposium on {P}rinciples of
  {P}rogramming {L}anguages ({POPL}), London, England}, pages 193--205, 2001.

\bibitem[Floyd(1967)]{Floyd67}
R.~W. Floyd.
\newblock {Assigning meanings to programs}.
\newblock In \emph{Symposium on {A}pplied {M}athematics}. Amer. Math. Soc.,
  1967.

\bibitem[Gonthier et~al.(2013)Gonthier, Asperti, Avigad, Bertot, Cohen,
  Garillot, Roux, Mahboubi, O'Connor, Biha, Pasca, Rideau, Solovyev, Tassi, and
  Th{\'{e}}ry]{DBLP:conf/itp/GonthierAABCGRMOBPRSTT13}
G.~Gonthier, A.~Asperti, J.~Avigad, Y.~Bertot, C.~Cohen, F.~Garillot, S.~L.
  Roux, A.~Mahboubi, R.~O'Connor, S.~O. Biha, I.~Pasca, L.~Rideau, A.~Solovyev,
  E.~Tassi, and L.~Th{\'{e}}ry.
\newblock \href{http://dx.doi.org/10.1007/978-3-642-39634-2_14}{A
  machine-checked proof of the odd order theorem}.
\newblock In \emph{Interactive Theorem Proving (ITP)}, pages 163--179, 2013.

\bibitem[Groves(1973)]{groves73}
T.~Groves.
\newblock Incentives in teams.
\newblock \emph{Econometrica: Journal of the Econometric Society}, 41\penalty0
  (4):\penalty0 617--631, 1973.

\bibitem[Hoare(1969)]{Hoare69}
C.~A.~R. Hoare.
\newblock An axiomatic basis for computer programming.
\newblock \emph{Communications of the {ACM}}, 12\penalty0 (10), 1969.

\bibitem[Kozen(1985)]{Kozen:1985}
D.~Kozen.
\newblock A probabilistic {PDL}.
\newblock \emph{Journal of Computer and System Sciences}, 30\penalty0 (2),
  1985.

\bibitem[Lapets et~al.(2008)Lapets, Levin, and Parkes]{BUCS-TR-2008-026}
A.~Lapets, A.~Levin, and D.~Parkes.
\newblock \href{http://cs-people.bu.edu/lapets/resource/typed-ec-mech.pdf}{{A
  Typed Truthful Language for One-dimensional Truthful Mechanism Design}}.
\newblock Technical Report BUCS-TR-2008-026, 2008.

\bibitem[Morgan et~al.(1996)Morgan, Mc{I}ver, and Seidel]{Morgan:1996}
C.~Morgan, A.~Mc{I}ver, and K.~Seidel.
\newblock Probabilistic predicate transformers.
\newblock \emph{ACM Transactions on Programming Languages and Systems},
  18\penalty0 (3):\penalty0 325--353, 1996.

\bibitem[Mu'alem and Nisan(2008)]{mu2008truthful}
A.~Mu'alem and N.~Nisan.
\newblock Truthful approximation mechanisms for restricted combinatorial
  auctions.
\newblock \emph{Games and Economic Behavior}, 64\penalty0 (2):\penalty0
  612--631, 2008.

\bibitem[Tadjouddine and Guerin(2007)]{DBLP:conf/ceemas/TadjouddineG07}
E.~M. Tadjouddine and F.~Guerin.
\newblock Verifying dominant strategy equilibria in auctions.
\newblock In \emph{{CEEMAS 2007}}, volume 4696 of \emph{Lecture Notes in
  Computer Science}, pages 288--297. Springer-Verlag, 2007.

\bibitem[Tadjouddine et~al.(2009)Tadjouddine, Guerin, and Vasconcelos]{TGV09}
E.~M. Tadjouddine, F.~Guerin, and W.~Vasconcelos.
\newblock \href{http://dx.doi.org/10.1007/978-3-540-93920-7_13}{Abstracting and
  verifying strategy-proofness for auction mechanisms}.
\newblock In \emph{Declarative Agent Languages and Technologies VI}, volume
  5397 of \emph{Lecture Notes in Computer Science}, pages 197--214.
  Springer-Verlag, 2009.

\bibitem[Turing(1949)]{Turing:1949}
A.~M. Turing.
\newblock \href{http://www.turingarchive.org/browse.php/B/8}{Checking a large
  routine}.
\newblock In \emph{{Report on a Conference on High Speed Automatic Computation,
  June 1949}}, pages 67--69, 1949.

\bibitem[Vickrey(1961)]{vickrey61}
W.~Vickrey.
\newblock Counterspeculation, auctions, and competitive sealed tenders.
\newblock \emph{The Journal of Finance}, 16\penalty0 (1):\penalty0 8--37, 1961.

\bibitem[Wooldridge et~al.(2007)Wooldridge, {A}gotnes, Dunne, and van~der
  Hoek]{DBLP:conf/aaai/WooldridgeADH07}
M.~Wooldridge, T.~{A}gotnes, P.~E. Dunne, and W.~van~der Hoek.
\newblock Logic for automated mechanism design---{A} progress report.
\newblock In \emph{{AAAI} Conference on Artificial Intelligence, Vancouver,
  British Colombia}, pages 9--17, 2007.

\end{thebibliography}

\iffull \else \scriptsize \fi
\putbib[header,refs]
\end{bibunit}

\iffull
\newpage

\appendix

\begin{bibunit}[abbrvnat]

\section{A note about worst-case complexity}
As is typical in program verification, we distinguish between constructing a
proof and checking it. Constructing the proof is hard: we do not assume that a
proof (or some representation, like a certificate) can be found automatically in
worst-case polynomial time, and we will even allow a human to play a limited
part in this process. However, checking the proof must be easy: agents should be
able to take the formal proof and check it in polynomial time.


While worst-case polynomial time for the entire process would be ideal, it is
not very realistic as we cannot expect an algorithm to prove the incentive
properties automatically---the proof may be a research contribution; deciding
whether an incentive property holds at all may be an undecidable problem.
However, relaxing the running time condition when constructing the proof is
well-motivated in our application. Unlike the mechanism itself, the proof
construction procedure will not be run many times on inputs of unknown origin
and varying size. Instead, for a particular mechanism, the proof is constructed
just once. In exchange for relaxing worst-case running time, we can verify rich
classes of mechanisms.

\section{Verifying the VCG Mechanism}
\label{sec:vcg}

The celebrated VCG mechanism is a foundational result in the mechanism design
literature. It calculates an outcome maximizing social welfare (i.e., the sum of
all the agents' valuations) and payments ensuring that truthful bidding is
incentive compatible. Let's briefly review the definition.

\begin{definition}[\citet{vickrey61,clarke71,groves73}]
  Let $O$ be a space of outcomes, and let $v : T \times O \to \mathbb{R}$ map
  agent types and outcomes to real values. Given a reported type profile $t$
  from $n$ agents, the VCG mechanism selects the \emph{social-welfare}
  maximizing outcome:
  \[
    o^* := \argmax_{o \in O} \sum_{i \in [n]} v(t_i, o) ,
  \]
  and computes payments
  \[
    p_j :=
    \max_{o \in O} \sum_{i \in [n] \setminus \{ j \}} v(t_i, o)
    -
    \sum_{i \in [n] \setminus \{ j \} } v(t_i, o^*) .
  \]
  That is, the payment for agent $j$ is the difference between the welfare for
  the other agents without $j$ present, and the welfare for the other agents
  with $j$ present.
\end{definition}

As Vickrey, Clarke, and Groves showed, this mechanism is incentive compatible.
Let's consider how to verify this fact in \THESYSTEM.  Like for RSM, we will
start by coding the utility function for a single bidder.  We will call it
\verb|VcgM| to distinguish it from the more special case we need for RSM;
\Cref{fig:vcg} presents the full code.
\begin{figure}[h!]
\begin{lstlisting}[style=numbers]
def VcgM(values, range) =
  welfare = sumFuns(values);
  outcome = findMax(welfare, range);

  for i = $1 \dots n$:
    welfWithout = sumFuns(values$_{-i}$);
    outWithout = findMax(welfWithout, range);
    prices[i] = welfWithout(outWithout) - welfWithout(outcome)
  end

  (outcome, prices)
\end{lstlisting}
\caption{Encoding the VCG mechanism in \THESYSTEM
  \label{fig:vcg}}
\end{figure}
The parameters to \verb|VcgM| are a list of valuation functions (\verb|values|)
and a set of possible outcomes (\verb|range|). We use two helper functions:
\verb|sumFuns| takes a list of valuation functions and sums them to form the
social welfare function, while \verb|findMax| takes a objective function and a
set of outcomes, and returns the outcome maximizing the objective.

To encode the incentive property, we will consider two runs of \verb|VcgM|. We
allow any single agent to deviate on the two runs. For the deviating agent, we
will model her report in the first run will be her``true'' valuation.  Then, we
want to give \verb|VcgM| the following annotation:
\[
  \{ \lstt{values} : O \to \mathbb{R} \}
  \to
  \{ \lstt{range} : \List{O} \}
  \to
  \{ (\lstt{out}, \lstt{pays}) : O \times \List{\mathbb{R}} \mid \lstt{out} \in \lstt{range} \land
  \lstt{vcgTruth} \} .
\]
The predicate \lstt{vcgTruth} captures truthfulness, like the assertion in
\Cref{sec:verif}:
\begin{align*}
  \lstt{vcgTruth} &:= \forall \lstt{j} \in [\lstt{m}].\;
  (\lstt{values}_{-j, 1} = \lstt{values}_{-j, 2}) \Longrightarrow
  \\
  & \lstt{values[j]$_1$(out$_1$[j])} - \lstt{pays$_1$[j]}
  \geq
  \lstt{values[j]$_1$(out$_2$[j])} - \lstt{pays$_2$[j]} .
\end{align*}
With appropriate annotations on \lstt{findMax}, \lstt{sumFuns}, and the
``all-but-$j$'' operation $(-)_{-j}$, \THESYSTEM verifies VCG automatically.

\section{A primer on program verification}
Program correctness and program verification have a venerable history.
In a visionary article,~\citet{Turing:1949} presents a rigorous proof
of correctness for a computer routine; although very short, this note
prefigures the current trends in deductive program verification and
introduces many fundamental ideas and concepts that still remain at
the core of program verification today. In particular, Turing makes a
clear distinction between the programmer and the verifier, and argues
that in order to alleviate the task of the verifier, the programmer
should annotate his code with \emph{assertions}, i.e.\, predicates on program
states. Moreover, Turing argues that it should be possible to verify
assertions locally and that the correctness of the routine should be
expressed by the initial and final assertions, i.e.\, the assertions
attached to the entry and exit points, which respectively capture
\emph{hypotheses} on the program inputs and \emph{claims} about the program outputs.

Leveraging contemporary developments in programming language theory,
the seminal works of \citet{Floyd67} and \citet{Hoare69} formalize
verification methods that adhere to the program proposed by Turing. Both
formalisms share similar principles and make a central use of
invariants for reasoning about programs with complex control-flow; for instance,
both methods use \emph{loop invariants}---assertions that
hold when the program enters a loop and remain valid during loop
iterations. However, the methods differ in the specifics of proving
program correctness. On the one hand, Hoare logic provides a \emph{proof
system}---a set of axioms and rules for combining axioms---that can be used
to build valid formal proofs that establish
program correctness. On the other hand, Floyd calculus computes---from
an annotated program---a set of \emph{verification conditions}: formulas of some
formal language such as first-order logic,
whose validity implies correctness of the program. Despite their differences,
the two approaches can be proved
equivalent, and assuming that the underlying language of assertions is
sufficiently expressive, are \emph{relatively complete}~\cite{Cook:1978};
relative completeness reduces the validity of program specifications
to the validity of assertions.

Both \citet{Floyd67} and \citet{Hoare69} are designed to reason about
\emph{properties}, i.e.\, sets of program executions. They cannot reason
about the larger class of \emph{hyperproperties}~\cite{Clarkson:2008},
which characterize sets of sets of program executions. Continuity
(small variations on the input induce small variations on the output),
and truthfulness (pay-off is maximized when agents play their true
value) are prominent \emph{binary} instances of hyperproperties, also
called \emph{relational properties}. Reasoning about relational
properties is challenging and the subject of active research in
programming languages. One way for reasoning
about such properties is by using relational variants
of \citet{Floyd67} and \citet{Hoare69}. These
variants~\cite{Benton:2004} reason about two programs (or two copies
of the same program) and use so-called \emph{relational
assertions}, assertions which describe pairs of states.

Another challenge in program verification is to deal with probabilistic
programs. Starting from the seminal work of \citet{Kozen:1985}, numerous logics
have been proposed to reason about properties of probabilistic programs,
including~\citep{Morgan:1996,Chadha:2007}.  Recently, \citet{Barthe:2009}
propose a relational logic for reasoning about probabilistic programs.
\citet{BartheGAHRS15} extend and generalize the relational logic to the setting
of a higher-order programming language.

In recent years, the theoretical advances in program verification have been
matched by the emergence of computer-aided verification tools that
have successfully validated large software developments.
Most tools implement algorithms for computing verification conditions;
the algorithms are similar in spirit to~\citet{Floyd67}, although they
typically use optimizations~\citep{flanagan2001}. Moreover, most systems
use fully automated tools to check that verification conditions are
valid. However, there is a growing trend to complement this process
with an interactive phase, where the programmer interactively builds a
proof of difficult verification conditions that cannot be
proved automatically. Contrary to automated tools, which try
to find a proof of validity, interactive tools try to
check that the proof of validity built interactively by the programmer
is indeed a valid proof. This interactive phase is often required for
proving rich properties of complex programs. It is also often
helpful for proving relational properties of probabilistic
programs~\cite{BartheDGKSS13}.





So far, our account of formal verification has focused on so-called
deductive methods: methods where the verification corresponds to build
formal proofs that can be constructed using a finite set of rules
starting from a given set of axioms. However, there
are many alternative
methods for proving program correctness. In particular, algorithmic
methods, such as model-checking, have been highly successful for
analyzing properties of large systems. Algorithmic methods are
fundamentally limited by the state explosion problem, since the
methods become intractable when the state space becomes too large.
Modern tools based on algorithmic verification exploit a number of
insights for alleviating the state explosion problem, including
symbolic representations of the state space, partial order reduction
techniques, and abstraction/refinement techniques.
%


\section{Related Work in Computer-aided Program Verification}

There is a small amount of work in the programming languages and computer-aided
program verification literature on verification of truthfulness in mechanism
design. \citet{BUCS-TR-2008-026} give an interesting approach, by presenting a
programming language for automatically verifying simple auction mechanisms.  A
key component of the language is a type analysis to determine if an algorithm is
\emph{monotone}; if bidders have a single real number as their value
(\emph{single-parameter domains}), then truthfulness is equivalent to a
monotonicity property (e.g., see \citet{mu2008truthful}). Their language can be
extended by means of user-defined primitives that preserve monotonicity. The
paper shows the use of the language for verifying two simple auction examples,
but it is unclear how this approach scales to larger auctions, and does not
extend beyond single parameter domains.

\citet{DBLP:conf/aaai/WooldridgeADH07} promote the use of automatic verification
techniques where mechanism design properties are described by means of
\emph{specification logics} (like Alternating Temporal Logic
\citep{Alur:2002:ATT}), and where the verification is performed in an automatic
way by using the \emph{model checking} technique.  Similarly,
\citet{DBLP:conf/ceemas/TadjouddineG07}  propose a similar approach where first
order logic is used as a specification logic. This approach works well for
simple auctions with few numbers of bidders but suffers from a state explosion
problem when the auctions are complex or the number of bidders is large. This
situation can be alleviated by combining different engineering techniques
\citep{TGV09}, but it is unclear if this approach can be scaled to handle
complex auctions with a large number of bidders. Moreover, these automatic
approaches do not work in setting of incomplete information.

An alternative approach based on \emph{interactive theorem proving} has been
explored by \citet{DBLP:conf/facs2/BaiTPG13}. Interactive theorem provers allows
specifying and formally reasoning about arbitrary auctions and different
truthfulness properties. More generally, they have been used to formalize large
theorems in mathematics \citep{DBLP:conf/itp/GonthierAABCGRMOBPRSTT13}.
Unfortunately, verifying the required properties can require advanced proof
engineering skills, even for very simple auctions;
\citet{DBLP:conf/facs2/BaiTPG13} consider the English and Vickrey auctions.

\iffull \else \scriptsize \fi
\putbib[header,refs]
\end{bibunit}
\fi

\end{document}